\newcommand{\bA}{\boldsymbol{A}}
\newcommand{\bL}{\boldsymbol{L}}
\newcommand{\bF}{\boldsymbol{F}}
\newcommand{\balpha}{\boldsymbol{\alpha}}
\newcommand{\bcdot}{\boldsymbol{\cdot}}
\newcommand{\blambda}{\boldsymbol{\lambda}}
\newcommand{\bchi}{\boldsymbol{\chi}}
\newcommand{\bdelta}{\boldsymbol{\delta}}
\newcommand{\bmu}{\boldsymbol{\mu}}
\newcommand{\bgamma}{\boldsymbol{\gamma}}
\newtheorem{thm}{Theorem}[section]
\theoremstyle{definition}
\newtheorem{defn}[thm]{Definition}
\newtheorem{Example}[thm]{Example}
\newcommand{\Real}{\mathbb R}
\numberwithin{equation}{section}
\begin{document}

\begin{frontmatter}

\title{Stability of time-varying nonlinear switching systems under perturbations\tnoteref{t1}}
\tnotetext[t1]{This work was supported in part by National Science Foundation of China (Grant Nos. 11071112 and 11071263), PAPD of Jiangsu Higher Education and in part by NSF 1021203 of the United States.}

\author[Dai]{Xiongping Dai}
\ead{xpdai@nju.edu.cn}
\author[Huang]{Yu Huang}
\ead{stshyu@mail.sysu.edu.cn}
\author[LX]{Mingqing Xiao}
\ead{mxiao@math.siu.edu}


\address[Dai]{Department of Mathematics, Nanjing University, Nanjing 210093, People's Republic of China}
\address[Huang]{Department of Mathematics, Zhongshan (Sun Yat-Sen) University, Guangzhou 510275, People's Republic of China}
\address[LX]{Department of Mathematics, Southern Illinois University, Carbondale, IL 62901-4408, USA}

\begin{abstract}
Recently, switched systems have been found in many practical applications such as in communication network dynamics, robot manipulators, traffic management, etc.
In this paper, we study the stability of time-varying linear switched systems under perturbations that satisfy linear growth condition, which is not available in current literature.
By introducing a new Liao-type exponent for time-varying linear switched systems, we generalize our recent result given in \cite{DHX}, and provide a sufficient condition of exponential stability for quasi-linear switched systems. The criterion of asymptotic stability is computable and is applicable to switched systems that may consist
of infinitely many time-dependent subsystems.
\end{abstract}

\begin{keyword}
Continuous-time switched system\sep exponential stability\sep Lyapunov-type exponent\sep Liao-type
exponent.

\medskip
\MSC[2010] Primary 93C15\sep 34H05 Secondary 93D20\sep 93D09.
\end{keyword}
\end{frontmatter}

\section{Introduction}\label{sec1}

Let $X,\mathcal{I}$ be two topological spaces and we denote by $\mathrm{C}^0(X,\mathbb{R}^n)$ the
set of all continuous functions from $X $ into $\mathbb{R}^n$ endowed with the uniform-convergence topology; i.e., the topology is induced by the metric \begin{equation*}
|f-g|=\sup_{x\in X}\|f(x)-g(x)\|\quad\forall f,g\in\mathrm{C}^0(X,\mathbb{R}^n).
\end{equation*}
Here $\|\cdot\|$ denotes the usual Euclidean norm on $\mathbb{R}^n$. Without any confusion, we also use $\|\cdot\|$ for the matrix norm on $\mathbb{R}^{n\times n}$ induced by the corresponding Euclidean norm. Throughout this paper we denote $\mathbb{R}_+=(0,+\infty)$ as the $t$-time space.
Let $d\ge 2$ be an integer, then two continuous function-valued functions are defined as
\begin{equation*}
\bA\colon\mathcal{I}\rightarrow\mathrm{C}^0(\mathbb{R}_+,\mathbb{R}^{d\times d});\;
i\mapsto A_i(\bcdot)=\left[A_i^{\ell m}(\bcdot)\right]\quad \textrm{and}\quad
\bF\colon\mathcal{I}\rightarrow
\mathrm{C}^0(\mathbb{R}_+\times\mathbb{R}^d,\mathbb{R}^d);\; i\mapsto
F_i(\bcdot,\bcdot)
\end{equation*}
where $\bA, \bF$ satisfy
\begin{equation*}
\|A_i(t)\|\le\balpha\quad \textrm{and}\quad \|F_i(t,x)\|\le
\bL\|x\|\quad\forall (t,x)\in\mathbb{R}_+\times\mathbb{R}^d
\end{equation*}
for some $\balpha>0$ and $\bL>0$, uniformly for $i\in\mathcal {I}$. We notice here that
$F_i(t,x)$ is not necessarily linear with respect to the space variable $x\in\mathbb{R}^d$.

Let $\sigma_{(\bcdot)}\colon\mathbb{N}\rightarrow\mathcal{I}$ and $\tau_{(\bcdot)}\colon\mathbb{N}\rightarrow\mathbb{R}_+$ be two infinite sequence, where $\tau$ is  strictly increasing with $\tau_n\uparrow+\infty$ and $\mathbb{N}=\{1,2,\dotsc\}$. Then each pair $(\sigma,\tau)$ gives rise to a piecewise constant, left continuous switching law $u_{\sigma,\tau}$ with the switching-time sequence $\tau$ as follows:
\begin{equation*}
u_{\sigma,\tau}\colon\mathbb{R}_+\rightarrow\mathcal {I};\quad
u_{\sigma,\tau}(t)=\sigma(n)\quad\textrm{ for }\tau_{n-1}<t\le \tau_n\textrm{ and }
n\in\mathbb{N}.\qquad \textrm{Here }\tau_0:=0.
\end{equation*}
A quasi-linear switched system associated with a switching control can be expressed in the form of
\begin{equation}\label{eq1.1}
\dot{x}(t)=A_{u_{\sigma,\tau}(t)}(t)x(t)+F_{u_{\sigma,\tau}(t)}(t,x(t)),\quad t\in\mathbb{R}_+\textrm{ and }
x(0)=x_0\in\mathbb{R}^d.
\end{equation}
In this paper, we study the stabilization problem of above system by using the switching control $u_{\sigma,\tau}(t)$ and develop a sufficient condition for $u_{\sigma,\tau}(t)$ which can warrant
the global exponential stability of the system.

The interest of this goal has been primarily motivated due to this type of model is governed by many man-made and natural systems in mathematics,
control engineering, biology and physics. Studies of this type of problems fall under the category of
``stabilization analysis" in the control theory. To the best of our knowledge, the study of time-varying switched systems with infinitely many subsystems is not available in current literature.

In order to study the stability of (\ref{eq1.1}) with a class of $F_i(\bcdot,\bcdot)$, an effective approach in literature is the perturbation method. That is, instead of considering (\ref{eq1.1}) directly, we consider the asymptotic stability of its linear approximation system
\begin{equation}\label{eq1.2}
\dot{v}(t)=A_{u_{\sigma,\tau}(t)}(t)v(t),\qquad
v(0)=v_0\in\mathbb{R}^d\textrm{ and }
t\in\mathbb{R}_+
\end{equation}
If a control $u_{\sigma,\tau}(t)$ can stabilize the above system, then we study (\ref{eq1.1}) by viewing  $F_{u_{\sigma,\tau}(t)}(t,x)$ as a perturbation of (\ref{eq1.2}) to seek under what condition the asymptotic stability can still be maintained.

Given a switching control $u_{\sigma,\tau}(t)$, the fundamental characteristic for
the stability of the switching system (\ref{eq1.2})
is the maximal Lyapunov exponent, whose mathematical expression is given by
\begin{equation}\label{eq1.3}
\blambda(u_{\sigma,\tau})=\max_{v_0\in\mathbb{R}^d}\left\{\limsup_{t\to+\infty}\frac{1}{t}\log\|v(t,v_0, u_{\sigma,\tau})\|\right\}<0.
\end{equation}
Here $v(\bcdot, v_0,u_{\sigma,\tau})\colon\mathbb{R}_+\rightarrow\mathbb{R}^d$, such that $v(0, v_0,u_{\sigma,\tau})=v_0$,
denotes the solution of (\ref{eq1.2}).
However, it is well known in classical literature that for time-varying systems, its Lyapunov exponent in general is not robust under (even arbitrary small) perturbation.
Thus it is not suitable for us to use
the Lyapunov exponent of (\ref{eq1.2}) to ``approximate'' (\ref{eq1.1}) to determine its asymptotic stability.
Even when $A_i(t)=A_i$ is a constant matrix for each $i\in\mathcal{I}$, the system (\ref{eq1.1}) becomes
\begin{equation}\label{eq1.4}
\dot{x}(t)=A_{u_{\sigma,\tau}(t)}x(t)+F_{u_{\sigma,\tau}(t)}(t,x(t)),\quad t\in\mathbb{R}_+\textrm{ and }
x(0)=x_0\in\mathbb{R}^d,
\end{equation}
whose dynamics behavior is known to be similar to the time-varying systems due to the switching action. Thus the same issue mentioned above still remains.
In light of this, in \cite{DHX}
the authors introduced a so-called {\it Liao-type exponent} for (\ref{eq1.4})
when $A_i, \forall i\in\mathcal {I}$, is upper triangular, and show that this Liao-type exponent is robust with respect to small perturbation.

Since we here are interested in the time-varying switched systems (\ref{eq1.1}), the Liao-type exponent defined in \cite{DHX} is no longer to be valid. This motivates us to look for a new type of exponent that can carry the asymptotic stability from the linear approximation (\ref{eq1.2}) to (\ref{eq1.1}) with the appearance of perturbation $\bF$. In this paper, we will develop a new Liao-type exponent for linear system (\ref{eq1.2}) which can provide three new important properties: (i) it is robust to many perturbations that satisfy linear growth condition; (ii) it can capture the stability even if subsystems have unstable modes; (iii) it includes the Liao-type exponent  defined in \cite{DHX} as a special case.

The importance of properties (i) and (iii) are readily to be seen according to the goal of this paper. Let us make some elaborations on property (ii).
For simplicity, we let the control set $\mathcal {I}=\{1,\dotsc,K\}$ be finite, $F_i(t,x)=F_i(x)=\textrm{o}(\|x\|)$ for $1\le i\le K$, and assume $\bmu$ is the
finite-dimensional stationary distribution of the associated Markovian
chain $\varSigma_K^+$, and set
\begin{equation*}
[i]_1=\{\sigma=(\sigma_n)_{n=1}^{+\infty}\in\varSigma_{K}^+\,|\, \sigma_1=i\}\quad \textrm{for }1\le i\le K.
\end{equation*}
Standard approach is to define a positive definite, norm-like
function $V$, called the Lyapunov function, on $\mathbb{R}^d$ such that
$V(x(t))$ is a decreasing function of $t$ for all solutions $x(t)$
of (\ref{eq1.4}). For this, one needs to
seek a symmetric and positive definite matrix $G$ such that the mean value of the maximum eigenvalues in terms of $\bmu$ is negative, i.e.,
\begin{equation}\label{eq1.6}
\sum_{1\le i\le K}\blambda_{\max}(GA_iG^{-1}+G^{-1}A_i^{\mathrm{T}}G)\bmu([i]_1)<0;
\end{equation}
see, e.g.,  \cite{KK,SWMWK,ZYS} for more details.  This condition requires that there exists at least one index $i\in\{1,\dotsc,K\}$ such that $$\blambda_{\max}(GA_iG^{-1}+G^{-1}A_i^{\mathrm{T}}G)<0,$$
which is equivalent to $\blambda_{\max}(A_i+A_i^{\mathrm{T}})<0$. Thus a necessary condition for this
approach is to require at least one
subsystem to be dissipative; in other words, there is a real positive constant $\gamma$ such that $\|e^{A_it}\|\le e^{-\gamma t}\,\forall t>0$ for some index $i\in\{1,\dotsc,K\}$. If this necessary condition is not satisfied, for example when all subsystems have unstable models, then the approach mentioned above cannot be applied.  To require at least one subsystem to be dissipative seems too restrictive since even an exponentially stable subsystem may not be dissipative. This is because $A_i$ is stable $\mathit{iff}$ $\|e^{A_it}\|\le C e^{-\gamma t}\;\forall t>0$ for some constants $C\ge1$ and $\gamma>0$. Here $C$ does not need to be equal to $1$.

The rest of the paper is arranged as follows. In Section~\ref{sec2}
we will provide a new, more general definition of the Liao-type exponents than that introduced in \cite{DHX}.
In Section~\ref{sec3} we will provide a computable criterion of asymptotic exponential stability, shown in Theorem~\ref{thm3.1}. Finally, the paper ends with concluding remarks in Section~\ref{sec4}.

\section{The Liao-type exponents}\label{sec2}%

In this section, we introduce the new Liao-type exponent that is more general than the one given by~\cite{DHX},
for switched systems
consisting of infinitely many non-autonomous
subsystems:
\begin{equation*}
\dot{x}=A_i(t)x+F_i(t,x),\quad(t,x)\in\mathbb{R}_+\times\mathbb{R}^d\textrm{ and }
i\in\mathcal{I},
\end{equation*}
where, for each control value $i\in\mathcal {I}$,
$A_i(t)=\left[A_i^{\ell m}(t)\right]\in\mathbb{R}^{d\times d}$ is a continuous upper triangular
matrix-valued function of $t$ and $F_i(t,x)\in\mathbb{R}^d$
is continuous with respect to $(t,x)$, such that
\begin{equation*}
\|A_i(t)x\|\le\balpha\|x\|\; \forall (t,x)\in\mathbb{R}_+\times\mathbb{R}^d\quad\textrm{and}\quad\|F_i(t,x)\|\le
\bL(t)\|x\|\;\forall x\in\mathbb{R}^d
\end{equation*}
where $\balpha, \bL(t)$ both are
independent of the indices $i\in\mathcal{I}$.

Given any $T_*>0$, a sequence of positive real numbers $\tau=\{\tau_n\}_{n=1}^{+\infty}$ is called a $T_*$-switching-time sequence, if it is a switching-time sequence (i.e., $0<\tau_1<\tau_2<\dotsm$ and $\tau_n\uparrow+\infty$) and such that
$\tau_n-\tau_{n-1}\le T_{\!*}$ for all $n\ge 1$.

In what follows, we let $u\colon\mathbb{R}_+\rightarrow\mathcal {I}$ be a
switching law associated with $\sigma\colon\mathbb{N}\rightarrow\mathcal{I}$ and a $T_*$-switching-time sequence $\tau=\{\tau_n\}_{n=1}^{+\infty}$.
Then, $u$ defines a quasi-linear switching system
\begin{equation*}
\dot{x}(t)=A_{u(t)}(t)x(t)+F_{u(t)}(t,x(t)),\qquad(t,x)\in\mathbb{R}_+\times\mathbb{R}^d.\eqno{(\bA,\bF)_{u}}
\end{equation*}

\begin{defn}\label{def2.1}
Let $\{n_s\}_{s=1}^{+\infty}$ be an arbitrarily given integer sequence
such that
\begin{equation*}
1\le n_s-n_{s-1}\le\varDelta
\quad\forall s\in\mathbb{N},\quad(n_0:=0),
\end{equation*}
where $\varDelta$ is a positive integer. Associated to this sequence, for $(\bA,\bF)_{u}$
the real number
\begin{equation*}
\bchi_*^+(\bA_{u})=\limsup_{s\to+\infty}\frac{1}{\tau_{n_s}}\sum_{k=1}^s\max_{1\le
j\le d}\left\{\int_{\tau_{n_{k-1}}}^{\tau_{n_k}}A_{u(t)}^{jj}(t)\,dt\right\}
\end{equation*}
is called a {\it Liao-type exponent} of $\bA_{u}$. Here $\tau=\{\tau_n\}_{n=1}^{+\infty}$ is a $T_*$-switching sequence determined by $u$.
\end{defn}

Notice that the new defined Liao-type exponent $\bchi_*^+$ depends on (i) the switching control $u(t)$; (ii) the duration period $[\tau_{n_{s-1}}, \tau_{n_s}]$ on the corresponding subsystems. In particular, if we choose $\{n_s=s\}_{s=1}^{+\infty}$, then $\bchi_*^+$ is just equal to the Liao-type exponent $\bchi$ defined in \cite{DHX} when the switched system is time invariant. Thus, the new definition is a generalization of the
one introduced by Dai, Huang and Xiao in \cite{DHX}.

As the ``linear approximation" of $(\bA,\bF)_{u}$, we next consider the following linear switching system $\bA_{u}$:
\begin{equation*}
\dot{v}(t)=A_{u(t)}(t)v(t),\qquad t\in\mathbb{R}_+\textrm{ and }
v(0)=v_0\in\mathbb{R}^d.
\end{equation*}
Then,
\begin{equation*}
\blambda(\bA_{u}):=\max_{1\le j\le
d}\left\{\limsup_{T\to+\infty}\frac{1}{T}\int_0^T
A_{u(t)}^{jj}(t)\,dt\right\}
\end{equation*}
is the (maximal) Lyapunov exponent of the linear system
$\bA_{u}$.

From these definitions, we have
$\blambda(\bA_{u})\le\bchi_*^+(\bA_{u})$. On the other hand, the following example shows that the new Liao-type exponent captures the stability better
than the one defined in \cite{DHX}.
Let us see an example.

\begin{Example}\label{example2.2}
Let $\mathcal{I}=\{0,1\}$ and
\begin{equation*}
{A}_0=\left[\begin{matrix}1&0\\
0&-2\end{matrix}\right],\quad {A}_1=\left[\begin{matrix}-2&0\\
0&1\end{matrix}\right]
\end{equation*}
and define
$\sigma=(\uwave{0,1},\uwave{0,1},\dotsc)$
and $\tau=\{\tau_n=n\}_{n=1}^{+\infty}$. For the switching system $\bA_{u_{\sigma,\tau}}$:
\begin{equation*}
\dot{v}(t)={A}_{u_{\sigma,\tau}(t)}v(t),\qquad v(0)=v_0\in\mathbb{R}^2\textrm{ and }
t\in\mathbb{R}_+,
\end{equation*}
it is easy to see that although each subsystem has an unstable mode, the system $\bA_{u_{\sigma,\tau}}$, associated with this periodically switching control $u_{\sigma,\tau}(t)$, is exponentially stable.
By a simple calculation, we can get that $\blambda(\bA_{u_{\sigma,\tau}})=-1/2$, and
associated to the sequence $\{n_s=2s\}_{s=1}^{+\infty}$,
$\bchi_*^+(\bA_{u_{\sigma,\tau}})=-1/2$. However, the previously proposed Liao-type exponent in \cite{DHX} equals $1$.
\end{Example}

In addition, it is essential to see that, as is shown by
Example~\ref{example2.2}, $\bchi_*^+(\bA_{u})<0$ does not
need to imply any subsystems of $\bA_{u}$ stable.

\section{A criterion of global exponential stability}\label{sec3}

In this section, we will present a criterion of asymptotic, exponential
stability for the switching system
$(\bA,\bF)_{u}$ introduced in Section~\ref{sec2}, which is an extension of \cite[Theorem~2.2]{DHX}.

\begin{thm}\label{thm3.1}
Let $A_i(t)\in\mathbb{R}^{d\times d}$ be upper-triangular for each
$i\in\mathcal {I}$. Assume $\bA_{u}$ has the Liao-type
exponent $\bchi_*^+(\bA_{u})<0$ associated to a $\varDelta$-sequence
$\{n_s\}_{s=1}^{+\infty}$. Then, there exists a constant $\bdelta>0$ such that whenever $\bL(t)\le
L<\bdelta$ for $t$ sufficiently large, the switching system
$(\bA,\bF)_u$ is globally, asymptotically, exponentially stable.

If $A_i(t)=\mathrm{diag}(A_i^{11}(t),\dotsc, A_i^{dd}(t))$ for all $i\in\mathcal{I}$, then $\bdelta$ can be defined by
\begin{equation*}
\bdelta=|\bchi_*^+(\bA_{u})|\exp(-2\bgamma\varDelta T_{\!*})
\end{equation*}
where
\begin{equation*}
\bgamma=\sup\left\{A_{u(t)}^{jj}(t)\,|\,t>0, 1\le j\le
d\right\}-\inf\left\{A_{u(t)}^{jj}(t)\,|\,t>0, 1\le j\le d\right\}.
\end{equation*}
\end{thm}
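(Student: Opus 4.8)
The plan is to exploit the upper-triangular structure of the $A_i(t)$ coordinate-by-coordinate, rather than through a quadratic Lyapunov function. The reason is that the hypothesis is phrased through $\bchi_*^+(\bA_u)$, which is built from $\max_{j}\int A_{u}^{jj}$ (a maximum of block-integrals), whereas differentiating $\|x\|^2$ would produce the strictly larger quantity $\int\max_{j}A_{u}^{jj}$; the gap between these is of order $\bgamma$ per unit time, so a direct $\|x\|^2$-estimate loses an uncontrolled rate and cannot reach a negative exponent. I would therefore prove a sharp one-block growth estimate across the intervals $[\tau_{n_{k-1}},\tau_{n_k}]$ supplied by the $\varDelta$-sequence, chain these estimates, and pass to the exponent. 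Throughout I may assume $\bL(t)\le L$ for all $t$: the hypothesis only controls $\bL$ for $t$ large, but on the remaining finite interval the crude bounds $\|A_i(t)x\|\le\balpha\|x\|$ and $\|F_i(t,x)\|\le\bL(t)\|x\|$ keep solutions finite, which affects neither the exponential rate nor global stability.

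I first treat the diagonal case, where the explicit $\bdelta$ is asserted. Here the homogeneous flow decouples and each coordinate has the scalar transition $\Phi_j(t,s)=\exp\!\big(\int_s^t A_{u(r)}^{jj}(r)\,dr\big)$. Writing $a=\tau_{n_{k-1}}$, $b=\tau_{n_k}$, $\ell_k=b-a\le\varDelta T_{*}$ and $\mu_k=\max_{1\le j\le d}\int_a^b A_{u(r)}^{jj}(r)\,dr$, Duhamel's formula gives coordinate-wise
\[
|x_j(b)|\le \Phi_j(b,a)\,|x_j(a)|+\int_a^b\Phi_j(b,s)\,\bL(s)\,\|x(s)\|\,ds .
\]
The homogeneous part is controlled \emph{exactly} by $\Phi_j(b,a)\le e^{\mu_k}$, so that the homogeneous flow over a block has operator norm $\max_j\Phi_j(b,a)=e^{\mu_k}$. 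The only place where the pointwise diagonal supremum $M=\sup_{t,j}A_{u(t)}^{jj}(t)$ must be traded for the averaged block rate $\mu_k$ is in the perturbation term, and this is where $\bgamma$ enters: since every diagonal entry lies in an interval of length $\bgamma$, one has $M\ell_k\le\mu_k+\bgamma\ell_k$, which converts each factor $e^{M\ell_k}$ into $e^{\mu_k}e^{\bgamma\varDelta T_{*}}$. A careful accounting of the two quantities that carry this trade — the transition kernel $\Phi_j(b,s)\le e^{M(b-s)}$ and the running size $\|x(s)\|\le e^{(M+L)(s-a)}\|x(a)\|$ obtained from a Gronwall bootstrap on the block — then yields a one-block estimate of the safe form $\|x(b)\|\le e^{\mu_k}\big(1+e^{2\bgamma\varDelta T_{*}}L\,\ell_k\big)\|x(a)\|$, with the factor $e^{2\bgamma\varDelta T_{*}}$ accounting for the two appearances of this conversion.

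Chaining the one-block estimate over $k=1,\dots,s$ and using $\sum_{k=1}^{s}\ell_k=\tau_{n_s}$ together with $\log(1+\xi)\le\xi$ gives $\|x(\tau_{n_s})\|\le\|x_0\|\exp\!\big(\sum_{k=1}^{s}\mu_k+e^{2\bgamma\varDelta T_{*}}L\,\tau_{n_s}\big)$. Dividing by $\tau_{n_s}$, taking $\limsup_{s\to\infty}$, and invoking the very definition $\limsup_s\frac1{\tau_{n_s}}\sum_{k=1}^s\mu_k=\bchi_*^+(\bA_u)$ produces $\limsup_{t\to\infty}\frac1t\log\|x(t)\|\le\bchi_*^+(\bA_u)+e^{2\bgamma\varDelta T_{*}}L$; passing from the subsequence $\{\tau_{n_s}\}$ to general $t$ only inserts a bounded factor on the terminal partial block (via the crude bound) and does not change the rate. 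The right-hand side is negative precisely when $L<|\bchi_*^+(\bA_u)|\,e^{-2\bgamma\varDelta T_{*}}$, which is the claimed $\bdelta$, and because every estimate is homogeneous of degree one in $x$ the decay rate is uniform in $x_0$, giving global exponential stability.

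Finally, for the genuinely upper-triangular case I would reduce to the diagonal estimate by a constant diagonal conjugation $D=\mathrm{diag}(\eta,\eta^2,\dots,\eta^d)$: this leaves the diagonal entries $A_i^{jj}$ — hence $\mu_k$, $\bgamma$ and $\bchi_*^+(\bA_u)$ — unchanged, while multiplying each strictly-upper-triangular entry $A_i^{\ell m}$, $\ell<m$, by $\eta^{\ell-m}\to 0$. Taking $\eta$ large makes the off-diagonal part smaller than any prescribed threshold, so it may be absorbed into the perturbation term, and applying the diagonal argument in the equivalent norm $\|D\,\cdot\|$ then furnishes some admissible $\bdelta>0$. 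The main obstacle is the one-block estimate with the correct constant: one must bound the backward transitions $\Phi_j(b,s)$ and the intra-block growth of $\|x\|$ against the net block factor $e^{\mu_k}$ \emph{without} reintroducing $\int\max_j A_{u}^{jj}$, and it is exactly this comparison — the twofold trade of the pointwise rate $M$ for the block-averaged rate $\mu_k$, each controlled by the oscillation $\bgamma$ — that forces the exponent $2$ in $\bdelta$. The conjugation, by contrast, distorts the constants through $\mathrm{cond}(D)$, which is why a clean closed form for $\bdelta$ is available only in the diagonal case.
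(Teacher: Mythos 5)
Your architecture mirrors the paper's (diagonal case first, block estimates along the $\varDelta$-sequence, chaining, then a diagonal conjugation to absorb the strictly upper-triangular part), but the step you yourself flag as ``the main obstacle'' --- the one-block estimate $\|x(b)\|\le e^{\mu_k}\bigl(1+e^{2\bgamma\varDelta T_{\!*}}L\,\ell_k\bigr)\|x(a)\|$ --- does not come out of the Duhamel-plus-bootstrap computation you describe. Feeding the a priori bound $\|x(s)\|\le e^{(M+L)(s-a)}\|x(a)\|$ into the perturbation integral leaves an extra factor $e^{L\ell_k}$ that your accounting silently drops: what one actually gets is $\|x(b)\|\le e^{\mu_k}\bigl(1+L\ell_k e^{(\bgamma+L)\ell_k}\bigr)\|x(a)\|$, hence a Lyapunov-exponent bound $\bchi_*^+(\bA_u)+L e^{(\bgamma+L)\varDelta T_{\!*}}$ rather than $\bchi_*^+(\bA_u)+L e^{2\bgamma\varDelta T_{\!*}}$. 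Since $L$ and $\bgamma$ are unrelated, $e^{L\varDelta T_{\!*}}$ cannot be traded for $e^{\bgamma\varDelta T_{\!*}}$; e.g.\ if all diagonal entries are identically $-5$ then $\bgamma=0$, the claimed threshold is $\bdelta=5$, but your chain only gives negativity when $Le^{L\varDelta T_{\!*}}<5$, which fails for $L$ near $5$ when $\varDelta T_{\!*}$ is moderate. A second, smaller issue: combining the coordinatewise Duhamel bounds into a bound on $\|x(b)\|$ costs a dimensional constant per block unless you work with the diagonal transition matrix at the vector level; since the block lengths $\ell_k$ have no positive lower bound, any per-block constant $>1$ independent of $\ell_k$ would destroy the exponent. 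The qualitative half of the theorem (existence of \emph{some} $\bdelta>0$) does survive your weaker estimate, but the explicit formula for $\bdelta$ in the diagonal case does not.

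The missing device is the paper's time-dependent renormalization: on each block one conjugates by $H_s(t)=\mathrm{diag}\bigl(h_s^{11}(t),\dots,h_s^{dd}(t)\bigr)$ with $h_s^{jj}(t)=\exp\{(t-\tau_{n_{s-1}})\chi_s^+-\int_{\tau_{n_{s-1}}}^{t}A_{u(\tau)}^{jj}(\tau)\,d\tau\}$, which turns the linear part into the \emph{scalar} matrix $\chi_s^+ I$ exactly, normalizes to the identity at the left endpoint, is expanding at the right endpoint (so blocks chain for free), and has condition number at most $e^{2\bgamma\varDelta T_{\!*}}$ --- which is the true origin of the factor $2$ in $\bdelta$, not a ``twofold trade'' inside a Duhamel kernel. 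After this conjugation the quadratic Lyapunov function you dismissed in your opening paragraph works perfectly: $\frac{d}{dt}\frac12\|y\|^2=\chi_s^+\|y\|^2+\langle y,\overline F\rangle\le(\chi_s^++Le^{2\bgamma\varDelta T_{\!*}})\|y\|^2$, a genuine differential inequality, so Gronwall produces the block rate $\chi_s^++Le^{2\bgamma\varDelta T_{\!*}}$ with no parasitic $e^{L\ell_k}$ and no dimensional constants. In short, your diagnosis of why the naive $\|x\|^2$-estimate fails is correct, but the cure is to repair the Lyapunov-function route via $H_s$, not to replace it by variation of constants; I would either adopt that renormalization or weaken your claim to the existence of an unspecified $\bdelta>0$.
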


\begin{proof}
The following proof is motivated by the one given in~\cite{DHX}. The approach is
a subtle combination of the new Liao-type exponent and Lyapunov functions.

Notice that there is no loss of generality in assuming
that $\bL(t)\le L$ for any $t>0$. Next, we will
first show the case when
$A_i(t)=\mathrm{diag}\left(A_i^{11}(t),\dotsc,A_i^{dd}(t)\right)$
for all $i\in\mathcal {I}$. The upper-triangular case will be
discussed afterwards.

Let $\{\tau_n\}_{n=1}^{+\infty}$ be the $T_*$-switching time sequence of $u$ as in Definition~\ref{def2.1}. We define a sequence of constants $\{\chi_s^+\}_{s=1}^{+\infty}$ by
\begin{equation}\label{eq2.1}
\chi_s^+=\max_{1\le j\le
d}\left\{\frac{1}{\tau_{n_s}-\tau_{n_{s-1}}}\int_{\tau_{n_{s-1}}}^{\tau_{n_s}}
A_{u(t)}^{jj}(t)\,dt\right\}
\end{equation}
and define, for $1\le j\le d$ and $s=1,\dotsc$, continuous
functions
\begin{subequations}\label{eq2.2}
\begin{align}
h_s^{jj}(t)&=\exp\left\{(t-\tau_{n_{s-1}})\chi_s^+-\int_{\tau_{n_{s-1}}}^t
A_{u(\tau)}^{jj}(\tau)\,d\tau\right\}\quad \textrm{for }\tau_{n_{s-1}}<t\le \tau_{n_s},\label{eq2.2a}\\
\intertext{which are piecewise differentiable. Let}
H_s(t)&=\mathrm{diag}\left(h_s^{11}(t),\dotsc, h_s^{dd}(t)
\right)\qquad \textrm{for }\tau_{n_{s-1}}<t\le \tau_{n_s}\label{eq2.2b}
\end{align}
\end{subequations}
be the diagonal $d$-by-$d$ matrix for $s=1,\dotsc$. Then from
(\ref{eq2.2a}), it follows that
\begin{equation}\label{eq2.3}
\sup\limits_{s\in{\mathbb N}}\left\{\max\limits_{1\le j\le
d}\left\{\sup\limits_{\tau_{n_{s-1}}<t\le \tau_{n_s}}\left\{h_s^{jj}(t),
{h_s^{jj}(t)}^{-1}\right\}\right\}\right\}\le\exp(\bgamma\varDelta
T_{\!*}),
\end{equation}
where $\bgamma$ is defined as in the theorem and $\varDelta$ is an upper bound of $n_s-n_{s-1}$ described in Definition~\ref{def2.1}.
Notice that for each $s\ge1$,
\begin{equation}\label{eq2.4}
h_s^{jj}(\tau_{n_{s-1}})=1\quad \textrm{and}\quad
h_s^{jj}(\tau_{n_s})\ge1.
\end{equation}
By the non-autonomous linear transformations of variables
\begin{equation}\label{eq2.5}
y=H_s(t)x,\quad\tau_{n_{s-1}}<t\le \tau_{n_s}
\end{equation}
for $s=1,\dotsc$, now $(\bA,\bF)_{u}$ restricted to
$(\tau_{n_{s-1}},\tau_{n_s}]$ is transformed into the following
quasi-linear system
\begin{equation}\label{eq2.6}
\dot{y}(t)=\overline{A}_s(t)
y(t)+\overline{F}_s(t,y(t)),\quad \tau_{n_{s-1}}<t\le \tau_{n_s}\textrm{ and }
y\in\mathbb{R}^d,
\end{equation}
where for $\tau_{n_{s-1}}<t\le \tau_{n_s}$, $\overline{A}_s(t)$ and
$\overline{F}_s(t,y)$ satisfy, respectively,
\begin{subequations}\label{eq2.7}
\begin{align}
\overline{A}_s(t)&=A_{u(t)}(t)+\frac{d^-H_s(t)}{dt}{H_s(t)}^{-1}\label{eq2.7a}\\
\intertext{and}\overline{F}_s(t,y)&=H_s(t)F_{u(t)}\left(t,{H_s(t)}^{-1}y\right).\label{eq2.7b}
\end{align}
\end{subequations}
Notice here that $\frac{d^-}{dt}$ denotes $\frac{d}{dt}$ at any
regular time $t$ and the left-derivative at a switching time
$t=\tau_n$. According to (\ref{eq2.7a}) and (\ref{eq2.2a}), a direct
calculation yields
\begin{equation}\label{eq2.8}
\overline{A}_s(t)\equiv\mathrm{diag}\left(\chi_s^+,\dotsc,\chi_s^+
\right)\quad \textrm{for }\tau_{n_{s-1}}<t\le \tau_{n_s},
\end{equation}
for all $s\in\mathbb{N}$.

Next, we will prove that (\ref{eq2.6}) satisfies some important estimation.
For any $s\in\mathbb{N}$, any $\tau_{n_{s-1}}<t\le \tau_{n_s}$, and any
$y=(y_1,\dotsc,y_d)^\textrm{T}\in\Real^d$, (\ref{eq2.7b}) together
with (\ref{eq2.3}) leads to
\begin{equation}\label{eq2.9}
\|\overline{F}_s(t,y)\|\le\|F_{u(t)}\left(t,
H_s(t)^{-1}y\right)\|\exp(\bgamma\varDelta T_{\!*}).
\end{equation}
Accordingly, it is easily seen that for any nonzero $x=H_s(t)^{-1}y$,
we obtain by (\ref{eq2.3})
\begin{equation}\label{eq2.10}
\frac{\|\overline{F}_s(t,y)\|}{\|y\|}\le\frac{\|F_{u(t)}(t,
x)\|\exp(2\bgamma\varDelta T_{\!*})}{\|x\|}\le L\exp(2\bgamma\varDelta
T_{\!*})
\end{equation}
for any $\tau_{n_{s-1}}<t\le \tau_{n_s}$, for all $s\in\mathbb{N}$. Denote
\begin{equation}\label{eq2.11}
\overline{F}_s(t,y)=\left(\bar{f}_{s,1}(t,y),\dotsc,\bar{f}_{s,d}(t,y)\right)^\textrm{T}\quad\forall s\in\mathbb{N},\
y\in\mathbb{R}^d,\ t\in(\tau_{n_{s-1}}, \tau_{n_s}].
\end{equation}
Hence, from the Cauchy inequality, it follows that for all
$s\in\mathbb{N}$, $\tau_{n_{s-1}}<t\le \tau_{n_s}$, and any
$y=(y_1,\dotsc,y_d)^\textrm{T}\in\Real^d$
\begin{equation}\label{eq2.12}
\left|\sum_{j=1}^dy_j\bar{f}_{s,j}(t,y)\right|\le
\left\{\sum_{j=1}^dy_j^2\right\}^{\frac12}\left\{\sum_{j=1}^d\bar{f}_{s,j}^2(t,y)\right\}^{\frac12}
=L\exp(2\bgamma\varDelta T_{\!*})\|y\|^2.
\end{equation}
Now, take arbitrarily a constant $\varepsilon$ with
$0<\varepsilon<1$. Define
\begin{equation}\label{eq2.13}
{\bdelta}_\varepsilon=\varepsilon|\bchi_*^+(\bA_{u})|\exp(-2\bgamma\varDelta
T_{\!*}).
\end{equation}
Then from (\ref{eq2.12}), the condition
$L\le{\bdelta}_\varepsilon$ yields
\begin{equation}\label{eq2.14}
\left|\sum_{j=1}^n{y_j}\bar{f}_{s,j}(t,y)\right|\le\varepsilon|\bchi_*^+(\bA_{u})|\cdot\|y\|^2
\end{equation}
for any $s=1,2,\dotsc$, $\tau_{n_{s-1}}<t\le \tau_{n_s}$, and any
$y=(y_1,\dotsc,y_d)^\textrm{T}\in{\mathbb R}^d$.

Hereafter, let
$L\le{\bdelta}_\varepsilon$ and $x_0\in{\mathbb
R}^d$ be arbitrarily taken. Let
\begin{equation*}
x(t)=x(t,x_0)=\left(x_1(t,x_0),\dotsc,x_d(t,x_0)\right)^{\mathrm
T}\in\mathbb R^d
\end{equation*}
be a solution of $(\bA,\bF)_{u}$ with $x(0)=x_0$, which is
continuous and piecewise differentiable in $t\in\mathbb{R}_+$. If
$x_0=\mathbf{0}$ the origin of $\mathbb{R}^d$, then
$x(t)\equiv\mathbf{0}$ for all $t>0$ from a Osgood-type uniqueness
theorem \cite[Lemma~2.5]{DHX}.

Next, assume $x_0\not=\mathbf{0}$ and so $x(t)\not=\mathbf{0}$ for
all $t>0$. For all $s\in\mathbb{N}$, we now define the Lyapunov
functions as follows:
\begin{equation}\label{eq2.15}
V_s(t)=\frac12\sum_{j=1}^dy_{s,j}^2(t) \quad \textrm{for }\tau_{n_{s-1}}\le t\leq
\tau_{n_s},
\end{equation}
where for each $s\in\mathbb{N}$,
\begin{equation*}
y_s(t)=\left(y_{s,1}(t),\dotsc,y_{s,d}(t)\right)^\textrm{T}=H_s(t)x(t,x_0)\quad \textrm{for }\tau_{n_{s-1}}<t\le \tau_{n_s},
\end{equation*}
and at the time instant $t=\tau_{n_{s-1}}$
\begin{equation*}
V_s(\tau_{n_{s-1}})=\lim_{t\downarrow \tau_{n_{s-1}}}V_s(t).
\end{equation*}
Thus, by (\ref{eq2.4}) we have
\begin{equation}\label{eq2.16}
V_s(T_{n_{s}})\ge V_s(\tau_{n_{s-1}}),\quad V_s(t)>0\quad
\textrm{and}\quad
\dot{y}_s(t)=\overline{A}_s(t)y_s(t)+\overline{F}_s(t,y_s(t))
\end{equation}
for any $\tau_{n_{s-1}}<t\le \tau_{n_s}$ and any $s=1,2,\dotsc$. This
together with (\ref{eq2.8}) yields that
\begin{equation}\label{eq2.17}
\dot{y}_{s,j}(t)={\chi_s^+}y_{s,j}(t)+\bar{f}_{s,j}(t,y_s(t)),\quad
\tau_{n_{s-1}}<t\le \tau_{n_s},
\end{equation}
for each $j=1,\dotsc,d$, where $\bar{f}_{s,j}(t,y_s(t))$ is defined
in the same way as in (\ref{eq2.11}).

Then from (\ref{eq2.15}), (\ref{eq2.16}) and (\ref{eq2.14}), it
follows that for any $s=1,2,\dotsc$ and any $\tau_{n_{s-1}}<t\le \tau_{n_s}$, we have
\begin{equation}\label{eq2.18}
\begin{split}
\frac{d^-}{dt}V_s(t)&=
\sum_{j=1}^dy_{s,j}(t)\dot{y}_{s,j}(t)={\chi_s^+}\sum_{j=1}^dy_{s,j}^2(t)
+\sum_{j=1}^d{y_{s,j}(t)}{\bar{f}_{s,j}(t,y_s(t))}\\
&\le{2\chi_s^+}V_s(t)+2\varepsilon|\bchi_*^+(\bA_{u})|
V_s(t)\\
&=2\left({\chi_s^+}+\varepsilon|\bchi_*^+(\bA_{u})|\right)V_s(t).
\end{split}
\end{equation}
Thus, for any $s=1,2,\dotsc$ and any $\tau_{n_{s-1}}<t\le \tau_{n_s}$, by
the Gronwall inequality (see \cite[Lemma~2.1.2]{BP}, for example) we
can obtain
\begin{subequations}\label{eq2.19}
\begin{align}
V_s(t)&\leq
V_s(\tau_{n_{s-1}})\exp\left\{2\left(\chi_s^++\varepsilon|\bchi_*^+(\bA_{u})|\right)(t-\tau_{n_{s-1}})\right\}.\label{eq2.19a}\\
\intertext{Particularly, at the time instant $t=\tau_{n_{s}}$ for $s=1,2,\dotsc$,
we have} V_s(\tau_{n_{s}})&\leq
V_s(\tau_{n_{s-1}})\exp\left\{2(\chi_s^++\varepsilon|\bchi_*^+(\bA_{u})|)(\tau_{n_{s}}-\tau_{n_{s-1}})\right\}.\label{eq2.19b}
\end{align}
\end{subequations}
Repeatedly applying (\ref{eq2.19b}) yields
\begin{equation}\label{eq2.20}
V_s(\tau_{n_s})\le
V_1(\tau_{n_0})\exp\left\{\sum_{\ell=1}^{s}2(\chi_\ell^++\varepsilon|\bchi_*^+(\bA_{u})|)
(\tau_{n_{\ell}}-\tau_{n_{\ell-1}})\right\}.
\end{equation}
Notice that $y_{1, j}(\tau_{n_0})=x_{0, j}$ for $j=1,\dotsc, d$, where
$x_0=(x_{0, 1}, \dotsc, x_{0, d})^\mathrm{T}\in\mathbb{R}^d$ as the
initial value of the solution $x(t)$ to $(\bA,\bF)_{u}$. Thus
\begin{equation*}
V_1(\tau_{n_0})=\frac12\sum_{j=1}^dy_{1,j}^2(\tau_{n_0})=\frac12\|x_0\|^2
\end{equation*}
and furthermore, for $\tau_{n_{s-1}}<t\le \tau_{n_s}$,
\begin{equation*}
V_s(t)\le
\frac{1}{2}\|x_0\|^2\exp\left\{2\left(\varepsilon|\bchi_*^+(\bA_{u})|t+\chi_s^+(t-\tau_{n_{s-1}})+\sum_{\ell=1}^{s}\chi_\ell^+(\tau_{n_{\ell}}-\tau_{n_{\ell-1}})\right)
\right\}.
\end{equation*}
Also according to (\ref{eq2.3}), we know
$|x_j(t,x_0)|\le|y_{s,j}(t)|\exp(\bgamma\varDelta T_{\!*})$ for
$j=1,\dotsc,d$ and $t>0$. Thus for any $\tau_{n_{s-1}}<t\le \tau_{n_s}$
for $s=1,2,\dotsc$, we have
\begin{equation*}
\begin{split}
\|x(t)\|&\le \|x_0\|\exp\left\{\bgamma\varDelta
T_{\!*}+2[\varepsilon|\bchi_*^+(\bA_{u})|t+\chi_s^+(t-\tau_{n_{s-1}})+\sum_{\ell=1}^s\chi_\ell^+(\tau_{n_{\ell}}-\tau_{n_{\ell-1}})]
\right\}.
\end{split}
\end{equation*}
Therefore, there follows that
\begin{equation}\label{eq2.23}
\blambda(x_0)=\limsup_{t\to+\infty}\frac1t\log\|x(t,x_0)\|\le\varepsilon|{\bchi}_*^+(\bA_{u})|+{\bchi}_*^+(\bA_{u})<0
\end{equation}
for any nonzero vector $x_0\in\Real^d$ and that $x(t,x_0)$ is
asymptotically exponentially stable.

Next, we notice that if $L<\bdelta$ where $\bdelta$ is as in the theorem, then we can always choose some
$\varepsilon$ with $0<\varepsilon<1$ so that
$L\le\bdelta_\varepsilon$.

Now we assume that $A_i(t)$ are upper-triangular for all $i\in\mathcal{I}$, i.e.
\begin{equation*}
A_i(t)=\left[\begin{matrix}a_i^{11}(t)&a_i^{12}(t)&\cdots & a_i^{1d}(t)\\
0 &a_i^{22}(t)&\cdots & a_i^{2d}(t)\\
\vdots & \vdots &\ddots & \vdots\\
0 &0&\cdots & a_i^{dd}(t)
\end{matrix}\right].
\end{equation*}
Denote $D_\gamma=\mathrm{diag}(\gamma, \gamma^2, \cdots, \gamma^d)$ where $\gamma$ is a positive real number. Then notice
\begin{equation*}\begin{split}
D_\gamma^{-1}A_i(t)D_\gamma&=\left[\begin{matrix}a_i^{11}(t)& \gamma a_i^{12}(t)&\cdots & \gamma^{d-1}a_i^{1d}(t)\\
0 &a_i^{22}(t) &\cdots & \gamma^{d-2} a_i^{2d}(t)\\
\vdots & \vdots &\ddots & \vdots\\
0 &0&\cdots & a_i^{dd}(t)
\end{matrix}\right]
=\left[\begin{matrix}a_i^{11}(t)&0 &\cdots & 0\\
0 &a_i^{22}(t)&\cdots & 0\\
\vdots & \vdots &\ddots & \vdots\\
0 &0&\cdots & a_i^{dd}(t)\\
\end{matrix}\right]+\left[\begin{matrix}0& \gamma a_i^{12}(t)&\cdots & \gamma^{d-1}a_i^{1d}(t)\\
0 &0&\cdots & \gamma^{d-2}a_i^{2d}(t)\\
\vdots & \vdots &\ddots & \vdots\\
0 &0&\cdots & 0\\
\end{matrix}\right],
\end{split}\end{equation*}
thus, after a similar transformation, the upper triangular case can be viewed as the previous case with an additional perturbation
\[
\left[\begin{matrix}0& \gamma a_i^{12}(t) &\cdots & \gamma^{d-1}a_i^{1d}(t)\\
0 &0&\cdots & \gamma^{d-2}a_i^{2d}(t)\\
\vdots & \vdots &\ddots & \vdots\\
0 &0&\cdots & 0\\
\end{matrix}\right]
\]
which can be set arbitrarily small when $\gamma$ is small enough. Thus, after the coordinates transformation $y=D_{\gamma}^{-1}x$, for those nonlinear perturbations which satisfy
\[
\|D_\gamma^{-1}F_i(t,D_\gamma y)\|\le \bL(t)\gamma^{1-d}\|y\|\qquad\forall y\in\mathbb{R}^d
\]
the conclusion follows by the argument of the first part if $\bL(t)<\bdelta\gamma^{d-1}$.

Therefore, the proof of Theorem~\ref{thm3.1} is completed.
\end{proof}

Since under the hypothesis of Theorem~\ref{thm3.1} every subsystems are
non-autonomous and not necessarily Lyapunov asymptotically stable,
the standard method of Lyapunov functions for
proving asymptotic stability is hardly applied (cf.~\cite[\S4.1]{BP}). In addition, since here lacks the regularity condition of $\bA_u$, the classical theory of Lyapunov exponents~\cite{BVGN} is invalid for the stability of $(\bA,\bF)_u$ even if $\|F_i(t,x)\|\le\bL\|x\|^{1+\varepsilon}$. Theorem~\ref{thm3.1} shows
the importance of the new Liao-type exponents defined above, since $\blambda(\bA_{u})<0$ cannot guarantee the stability of a system $(\bA,\bF)_u$.

To illustrate Theorem~\ref{thm3.1}, we consider the following simple example.

\begin{Example}\label{example3.2}
Let
\begin{equation*}
B_0(t)=\left[\begin{matrix}1+\gamma_0(t)&\alpha_0(t)\\
\beta_0(t)&-2+\eta_0(t)\end{matrix}\right]\quad \textrm{and}\quad B_1(t)=\left[\begin{matrix}-2+\gamma_1(t)&\alpha_1(t)\\
\beta_1(t)&1+\eta_1(t)\end{matrix}\right]
\end{equation*}
where $\gamma_0(t), \gamma_1(t), \eta_0(t), \eta_1(t),
\alpha_0(t),\alpha_1(t), \beta_0(t),\beta_1(t)$ all are functions
piecewise continuous on $\mathbb{R}_+$ converging to $0$ as
$t\to+\infty$, and let $\sigma$ and $\tau$ be given as in
Example~\ref{example2.2}. Then, we have the following switched system:
\begin{equation}\label{eq2.22}
\dot{x}(t)=B_{\!u_{\sigma,\tau}(t)}(t)x(t),\quad x(0)=x_0\in\mathbb{R}^2\textrm{ and }
t\in\mathbb{R}_+.
\end{equation}
Let us denote
\begin{equation*}
F_0(t,x)=\left[\begin{matrix}\gamma_0(t)&\alpha_0(t)\\
\beta_0(t)&\eta_0(t)\end{matrix}\right]x\quad \textrm{and}\quad F_1(t,x)=\left[\begin{matrix}\gamma_1(t)&\alpha_1(t)\\
\beta_1(t)&\eta_1(t)\end{matrix}\right]x,\quad\forall (t,x)\in\mathbb{R}_+\times\mathbb{R}^2.
\end{equation*}
Then, (\ref{eq2.22}) is equivalent to
\begin{equation*}
\dot{x}(t)=A_{u_{\sigma,\tau}(t)}x(t)+F_{u_{\sigma,\tau}(t)}(t,x(t)),\qquad x(0)=x_0\in\mathbb{R}^2\textrm{ and }
t\in\mathbb{R}_+,\leqno{(\ref{eq2.22})^\prime}
\end{equation*}
where $A_0$ and $A_1$ are the same as in Example~\ref{example2.2}.
According to Theorem~\ref{thm3.1}, the above system is exponentially stable under the switching control $u_{\sigma,\tau}$.
\end{Example}


\section{Concluding remarks}\label{sec4}%

In this paper, by further generalizing approaches developed in a series of papers
\cite{L73,Dai,DHX}, we obtain a criterion of asymptotic exponential
stability for the quasi-linear continuous-time switching
system based on a new Liao-type exponent for time-varying switched systems; see Theorem~\ref{thm3.1}. This paper can be viewed an important generalization
of \cite{DHX}. Comparing to \cite{DHX}, its significance includes the following:
\begin{enumerate}
\item[(1)] For a deterministic switching system $(\bA,\bF)_{u}$, the
formal ``linear approximations" of its subsystems are time dependent;

\item[(2)]  the new Liao-type exponent $\bchi_*^+(\bA_{u})$  defined in this paper can deal with broader class of systems than the one given in \cite{DHX};
\end{enumerate}

A Liao-type exponent $\bchi_*^+$ of a deterministic system is
generally more useful than its corresponding maximal Lyapunov exponent $\blambda$ for perturbations. More specifically,
$\bchi_*^+<0$  provides us more accurate information than $\blambda<0$. On the other hand, it is worthy to mention that from the viewpoint of ergodic theory, the Liao-type exponents $\bchi_*^+$ can approach
arbitrarily the Lyapunov exponents $\blambda$ according to the subsequent paper \cite{Dai-siam}.

Furthermore, the proposed Liao-type exponent presented in this paper can guide us to
choose a stabilizing switching control for nonlinear switched systems.



\end{document}